\theoremstyle{plain}
\newtheorem{theorem}{Theorem}[section]
\newtheorem{proposition}[theorem]{Proposition}
\newtheorem{lemma}[theorem]{Lemma}
\newtheorem{corollary}[theorem]{Corollary}
\theoremstyle{definition}
\newtheorem{notation}[theorem]{Notation}
\newtheorem{example}[theorem]{Example}
\theoremstyle{remark}
\newtheorem{remark}[theorem]{Remark}
\def\au{\mathcal{A}}
\title[The Rank and HNP of Submonoids of a Free Monoid]{The Rank and Hanna Neumann Property of Some Submonoids of a Free Monoid\footnote{Contributed talk titled ``On the rank of the intersection of two submonoids of a free monoid" at \emph{A$^3$: Abstract Algebra and Algorithms Conference, Eger, Hungary, August 14-17, 2011.}}}
\author[S. N. Singh, K. V. Krishna]{Shubh Narayan Singh,  K. V. Krishna}
\address{Department of Mathematics\\
Indian Institute of Technology Guwahati\\
Guwahati, India}
\email{\{shubh,kvk\}@iitg.ac.in}
\keywords{Finitely generated monoids, semi-flower automata, rank, Hanna Neumann property.}
\subjclass{68Q70, 68Q45, 20M35.}
\begin{document}

\maketitle

\begin{abstract}
This work aims at further investigations on the work of Giambruno and Restivo \cite{giam08} to find the rank of the intersection of two finitely generated submonoids of a free monoid. In this connection, we obtain the rank of a finitely generated submonoid of a free monoid that is accepted by semi-flower automaton with two bpi's. Further, when the product automaton of two deterministic semi-flower automata with a unique bpi is semi-flower with two bpi's, we obtain a sufficient condition on the product automaton in order to satisfy the Hanna Neumann property.
\end{abstract}

\section{Introduction}

In \cite{howson54}, Howson obtained an upper bound for the rank of intersection of two finitely generated subgroups of a free group in terms of the individual ranks of subgroups. Thus, it is known that the intersection of two finitely generated subgroups of a free group  is finitely generated. In 1956, Hanna Neumann proved that if $H$ and $K$ are finite rank subgroups of a free group, then
\[\widetilde{rk}(H \cap K) \le 2\widetilde{rk}(H)\widetilde{rk}(K),\] where $\widetilde{rk}(N) = \max(0, rk(N)-1)$ for a subgroup $N$ of rank $rk(N)$. This is an improvement on Howson's bound. Further, Neumann conjectured that\\

\hfill $\widetilde{rk}(H \cap K) \le \widetilde{rk}(H)\widetilde{rk}(K)$,\hfill ($\star$)\\

\noindent which is known as Hanna Neumann conjecture \cite{neu56}.

In contrast, it is not true that the intersection of two finitely generated submonoids of a free monoid is finitely generated. Since Tilson's work \cite{til72} in 1972, through the work of Giambruno and Restivo \cite{giam08} in 2008, there are several contributions in the literature on the topic. Using automata-theoretic approach, Giambruno and Restivo have investigated an upper bound for the rank of the intersection of two submonoids of special type in a free monoid. In fact, for the special case, they have proved the Hanna Neumann property. Two submonoids $H$ and $K$ are said to satisfy  \emph{Hanna Neumann property} (in short, HNP), if $H$ and $K$ satisfy the inequality ($\star$).

This work extends the work of Giambruno and Restivo \cite{giam08} to another special class of submonoids. Here, we find the rank of a finitely generated submonoid of a free monoid that is accepted by semi-flower automaton with two bpi's. Further, we obtain a condition to extend HNP for the submonoids of a free monoid which satisfy the following condition $C$.
\begin{center}
\begin{minipage}{10.6cm}
Two submonoids of a free monoid are said to satisfy the condition $C$,\break if they are accepted by deterministic semi-flower automata, each with a unique bpi and their product automaton is semi-flower with two bpi's.
\end{minipage}
\end{center}

Rest of the paper is organized as follows. In Section 2, we present some preliminary concepts and results that are used in this work. Section 3 is dedicated to present the main results of the paper. We conclude the paper in Section 4.

\section{Preliminaries}

In this section, we present some background material from  \cite{bers85,giam07,giam08}. We try to confine to the terminology and notations given there so that one may refer to \cite{bers85,giam07,giam08} for those notions that are not presented here, if any.

Let $A$ be a finite set called an \emph{alphabet} with its elements as \emph{letters}. The free monoid over $A$ is denoted by $A^*$ and $\varepsilon$ denotes the empty word -- the identity element of $A^*$. It is known that every submonoid of $A^*$ is generated by a unique minimal set of generators. Thus, the \emph{rank} of a submonoid $H$, denoted by $rk(H)$, of $A^*$ is defined as the cardinality of the minimal set of generators $X$ of $H$, i.e. $rk(H) = |X|$. Further, the \emph{reduced rank} of a submonoid $H$ of $A^*$ is defined as $\max(0, rk(H)-1)$ and it is denoted by $\widetilde{rk}(H)$.

An \emph{automaton} $\au$ over an alphabet $A$ is a quadruple $(Q, I, T, \mathcal{F})$, where $Q$ is a finite set called the set of \emph{states}, $I$ and $T$ are subsets of $Q$ called the sets of \emph{initial} and \emph{final} states, respectively,
and $\mathcal{F}\subseteq Q\times A\times Q$ called the set of \emph{transitions}. Clearly, by denoting the states as vertices/nodes and the transitions as labeled arcs, an automaton can be represented by a digraph in which initial and final states shall be distinguished appropriately.

A \emph{path} in $\au$ is a finite sequence of consecutive arcs in its digraph. For $q_i \in Q$ ($0\le i \le k$) and $a_j \in A$ ($1 \le j \le k$), let
\[q_0 \xrightarrow{a_1} q_1 \xrightarrow{a_2} q_2 \xrightarrow{a_3} \cdots \xrightarrow{a_{k-1}} q_{k-1} \xrightarrow{a_k} q_k\] be a path $P$ in an automaton $\au$ that is starting at $q_0$ and ending at $q_k$. In this case, we write $i(P) =q_0$ and $f(P) = q_k$.  The word $a_1\cdots a_k \in A^*$ is the \emph{label of the path} $P$. For each state $q \in Q$, the \emph{null path} is a path from $q$ to $q$ labeled by $\varepsilon$.

A path in $\au$ is called \emph{simple} if all the states on the path are distinct. A path that starts and ends at the same state is called as a \emph{cycle}, if it is not a null path. A cycle with all its intermediate states are distinct is called a \emph{simple cycle}. A cycle that starts and ends in a state $q$ is called simple in $q$, if no intermediate state is equal to $q$. Other notions related to paths, viz. subpath, prefix and suffix, can be interpreted with their literal meaning or one may refer to \cite{giam08}.

Let $\au$ be an automaton. The \emph{language accepted/recognized by $\au$}, denoted by $L(\au)$, is the set of words that are labels
of paths from an initial state to a final state. A state $q\in Q$ is \emph{accessible} (respectively, \emph{coaccessible})
if there is a path from an initial state to $q$ (respectively, a path from $q$ to a final state). An automaton is called
\emph{trim} if all the states of the automaton are accessible and coaccessible. An automaton $\au = (Q, I, T, \mathcal{F})$ is \emph{deterministic} if it has a unique initial state, i.e. $|I| = 1$, and there is at most one transition defined for a state and a letter.

An automaton is called a \emph{semi-flower automaton} if it is trim with a unique initial state that is equal to a unique final state such that all the cycles visit the unique initial-final state.

If an automaton $\au = (Q, I, T, \mathcal{F})$ is semi-flower, we denote the initial-final state by $1$. In which case, we simply write $\au = (Q, 1, 1, \mathcal{F})$.  Further, let us denote by $C_{\au}$ the set of cycles that are simple in $1$ and by $Y_{\au}$ the set of their labels.

Now, in the following we state the correspondence between semi-flower automata and finitely generated submonoids of a free monoid.

\begin{theorem}[\cite{giam08}]\label{thm1}
If $\au$ is a semi-flower automaton over an alphabet $A$, then $Y_{\au}$ is finite and $\au$ recognizes the submonoid generated by $Y_{\au}$ in $A^*$. Moreover, if $\au$ is deterministic, then $Y_{\au}$ is the minimal set of generators of the submonoid recognized by $\au$.
\end{theorem}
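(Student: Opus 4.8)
The plan is to prove the three assertions in order: finiteness of $Y_{\au}$, the identity $L(\au)=\langle Y_{\au}\rangle$, and—under determinism—minimality of $Y_{\au}$ as a generating set. The key preliminary observation, which I would make first, is that in a semi-flower automaton a cycle that is simple in $1$ is automatically a simple cycle. Indeed, let $c\colon 1=q_0\xrightarrow{a_1}q_1\cdots\xrightarrow{a_k}q_k=1$ be simple in $1$, so $q_i\neq 1$ for $0<i<k$. If an intermediate state repeated, say $q_i=q_j$ with $0<i<j<k$, then the subpath from $q_i$ to $q_j$ would be a non-null cycle all of whose states lie among $q_i,\dots,q_j$, none of which equals $1$, contradicting the semi-flower hypothesis that every cycle visits $1$. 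Hence $q_1,\dots,q_{k-1}$ are pairwise distinct and distinct from $1$, so $c$ is a simple cycle through $1$. Since $Q$ is finite there are only finitely many such cycles, whence $C_{\au}$ is finite and $|Y_{\au}|\le|C_{\au}|$ is finite.

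Next I would prove that $\au$ recognizes $\langle Y_{\au}\rangle$. Since the unique initial state equals the unique final state $1$, the language $L(\au)$ is exactly the set of labels of paths from $1$ to $1$, the null path contributing $\varepsilon$. Given such a path of positive length, I would list the positions at which it is in state $1$; consecutive such positions cut it into non-null subpaths that start and end at $1$ with no intermediate visit to $1$, i.e.\ into cycles simple in $1$, so its label lies in $Y_{\au}^{*}$. Conversely, concatenating the cycles realizing a word of $Y_{\au}^{*}$ produces a path from $1$ to $1$ with that label, and $\varepsilon$ is the label of the null path at $1$. Therefore $L(\au)=Y_{\au}^{*}=\langle Y_{\au}\rangle$.

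For the minimality statement, recall that the unique minimal generating set $X$ of $H:=\langle Y_{\au}\rangle$ consists precisely of the nonempty elements of $H$ that cannot be written as a product of two nonempty elements of $H$, and that $X$ is contained in every generating set of $H$; so it suffices to show that every $y\in Y_{\au}$ is such an indecomposable element. First, $y\neq\varepsilon$ because a cycle is not a null path. Suppose $y=uv$ with $u,v\in H\setminus\{\varepsilon\}$, and let $c$ be a cycle simple in $1$ with label $y$. Because $\au$ is deterministic, the path starting at $1$ labeled by $y$ is unique, namely $c$. Since $u\in H=L(\au)$, the path from $1$ labeled by the nonempty proper prefix $u$ of $y$ must end at $1$; but this path is a proper prefix of $c$, so $c$ passes through $1$ at an intermediate position, contradicting that $c$ is simple in $1$. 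Hence each $y\in Y_{\au}$ is indecomposable, giving $Y_{\au}\subseteq X\subseteq Y_{\au}$ and thus $Y_{\au}=X$.

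I expect the main point to be the opening observation that being ``simple in $1$'' already forces a cycle to be simple, since that is what bounds the lengths of the cycles in $C_{\au}$ and hence their number; the recognition statement is then a routine decomposition of paths at their visits to $1$, and the minimality reduces to exploiting the uniqueness of the path labeled by a given word in a deterministic automaton.
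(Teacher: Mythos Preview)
Your argument is correct in all three parts: the observation that a cycle simple in $1$ must be a simple cycle (else one obtains a cycle avoiding $1$), the decomposition of accepted words at successive visits to $1$, and the use of determinism to force any proper prefix lying in $L(\au)$ to witness an intermediate visit of the cycle to $1$.

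However, there is nothing to compare against: the paper does not prove Theorem~\ref{thm1}. It is stated in the preliminaries section as a result quoted from Giambruno and Restivo~\cite{giam08}, with no proof given here. Your write-up is essentially the standard proof one would expect (and presumably the one in~\cite{giam08}), so it is fine as a self-contained justification, but the present paper simply cites the result and moves on.
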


In addition to the above result, given a finitely generated submonoid $H$ of the free monoid $A^*$, one can easily construct a semi-flower automaton $\au$ such that $L(\au) = H$. Here, to construct $\au$, one may choose a initial-final state and connect a petal to the initial-final state that corresponds to each word of a (finite) generating set of $H$.

With this basic information, we now present the two results of Giambruno and Restivo which will be generalized/extended in the present paper.

\begin{theorem}[\cite{giam08}]\label{th2.2}
If $\au = (Q, 1, 1, \mathcal{F})$ is a semi-flower automaton with a unique bpi, then \[rk(L(\au)) \le |\mathcal{F}| - |Q| + 1.\]
Moreover, if $\au$ is deterministic, then \[rk(L(\au)) = |\mathcal{F}| - |Q| + 1.\]
\end{theorem}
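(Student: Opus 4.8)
The plan is to replace the rank by a purely graph-theoretic quantity and then count that quantity exactly. By Theorem~\ref{thm1} the submonoid $L(\au)$ is generated by $Y_{\au}$, and since the rank of a submonoid is the size of its unique minimal generating set (hence at most the size of any generating set), while $Y_{\au}$ is the image of $C_{\au}$ under the labelling map, we get
\[
rk(L(\au)) \le |Y_{\au}| \le |C_{\au}|.
\]
So it is enough to prove the exact equality $|C_{\au}| = |\mathcal{F}| - |Q| + 1$. The ``moreover'' clause then follows immediately: if $\au$ is deterministic then $Y_{\au}$ \emph{is} the minimal generating set by Theorem~\ref{thm1}, so $rk(L(\au)) = |Y_{\au}|$, and two distinct cycles simple in $1$ must carry distinct labels since a word labels at most one path issuing from the unique initial state; hence $|Y_{\au}| = |C_{\au}|$ and $rk(L(\au)) = |\mathcal{F}| - |Q| + 1$.

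To count $C_{\au}$, let $v$ be the unique bpi. The crux is the claim that \emph{every cycle of $\au$ passes through $v$}. Since every cycle visits $1$ and decomposes at its visits to $1$ into cycles simple in $1$, it suffices to prove this for a $C \in C_{\au}$; note that such a $C$ is in fact a simple cycle, because a repeated intermediate state would produce a sub-cycle avoiding $1$, contrary to $\au$ being semi-flower. Suppose $v$ did not lie on $C$. Then $v \ne 1$, so $1$ is not a bpi; consequently every state on $C$ has in-degree exactly $1$ (it is not a bpi, and $C$ enters it), and the unique incoming transition of each state of $C$ is the one traversed by $C$. Hence no transition enters the state set of $C$ from outside it, contradicting the coaccessibility of $v$ (a path from $v$ to $1$ would have to cross into that state set). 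In particular, a cycle simple in $1$, being simple, meets $v$ exactly once and therefore uses exactly one transition ending at $v$.

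The second step upgrades this to a bijection. Let $T_0$ consist of the unique incoming transition of each state different from $v$; this is a set of $|Q|-1$ transitions, each of the $|Q|-1$ states $\ne v$ having in-degree exactly $1$ by trimness and the fact that $\au$ has a cycle. I would check that $T_0$ is a spanning tree of the underlying graph: following these transitions backwards from any state can never close up into a cycle (such a cycle would avoid $v$, contradicting the claim), so the backward walk must reach $v$; this makes $T_0$ connected, hence a tree with $|Q|-1$ edges, and shows moreover that the unique $T_0$-path from $v$ to any state is directed and starts at $v$. Its $|\mathcal{F}|-(|Q|-1)$ chords are exactly the transitions ending at $v$. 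For a chord $\tau=(x,v)$, the fundamental cycle in $T_0+\tau$ is the directed $T_0$-path from $v$ to $x$ followed by $\tau$, a simple directed cycle, which must visit $1$ (as $\au$ is semi-flower) and so belongs to $C_{\au}$; conversely, any $C \in C_{\au}$ is a simple cycle using at least one chord (it is not contained in the acyclic $T_0$) and at most one (it meets $v$ only once and every chord ends at $v$), hence $C$ is the fundamental cycle of that chord. So $\tau \mapsto (\text{fundamental cycle of }\tau)$ is a bijection from the chord set onto $C_{\au}$, giving $|C_{\au}| = |\mathcal{F}| - |Q| + 1$, as required.

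The step I expect to be the real obstacle is the claim that no cycle can miss $v$: this is exactly where the hypothesis of a single bpi is used, and it is also what guarantees the acyclicity that makes $T_0$ a tree, so getting it cleanly is essential. The small degenerate configurations ($v=1$, a single petal, or the one-state automaton) should be checked to fit the formula, but I do not anticipate any difficulty there.
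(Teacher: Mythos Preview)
Your proof is correct. The paper obtains this statement by a different route: it first proves the two-bpi rank formula (Theorem~\ref{2brk}) and then specialises to a single bpi (taking $p=q$, hence $l=0$) to get $rk(L(\au))\le k$ where $k$ is the in-degree of the bpi, after which a one-line in-degree count yields $|\mathcal{F}|=(|Q|-1)+k$. Your argument is instead self-contained: you show directly that every cycle must visit the bpi $v$, assemble the unique incoming transitions at the non-$v$ states into a spanning tree $T_0$, and biject the chords (all of which end at $v$) with $C_{\au}$ via fundamental cycles. The underlying combinatorial fact---that arcs entering the unique bpi are in bijection with cycles simple in $1$---is the same in both proofs; the paper's in-degree sum is a shorter substitute for your spanning-tree construction, while your version has the virtue of making that bijection explicit rather than leaving it implicit in the two-bpi case analysis.
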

Here, a state $q$ of an automaton is called a \emph{branch point going in}, in short \emph{bpi}, if the indegree of $q$ (i.e. the number of arcs coming into $q$) is at least 2.

\begin{theorem}[\cite{giam08}]\label{giamhnp}
If $H$ and $K$ are the submonoids accepted by deterministic semi-flower automata $\au_H$ and $\au_K$, respectively,
each with a unique bpi such that $\au_H  \times \au_K$ is a semi-flower automaton with a unique bpi, then \[\widetilde{rk}(H \cap K) \le \widetilde{rk}(H)\widetilde{rk}(K).\]
\end{theorem}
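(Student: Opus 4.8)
The plan is to express each reduced rank via Theorem~\ref{th2.2} as ``(in-degree of the relevant bpi) $-1$'', and then bound the in-degree of the bpi of the product automaton $\au:=\au_H\times\au_K$. This $\au$ is deterministic (a product of deterministic automata), recognizes $H\cap K$, and by hypothesis is a trim semi-flower automaton with a unique bpi $\beta$; write $\beta_H,\beta_K$ for the bpi's of $\au_H,\au_K$, and $1$ for the common initial-final state. We may assume $\widetilde{rk}(H),\widetilde{rk}(K)\ge1$ and that $\au$ has a bpi, for otherwise a reduced rank vanishes and the inequality is immediate. The engine is a handshake count: in a trim deterministic semi-flower automaton with a single bpi $b$, every state other than $b$ has in-degree exactly $1$ (at least $1$ since every state is accessible and $1$ lies on a cycle, at most $1$ since the state is not a bpi), so $|\mathcal F|=|Q|-1+d^{-}(b)$, where $d^{-}(\cdot)$ denotes in-degree, and Theorem~\ref{th2.2} then gives $rk=d^{-}(b)$ and $\widetilde{rk}=d^{-}(b)-1$. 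Applied to $\au_H,\au_K,\au$, this reduces the theorem to the single inequality
\[
d^{-}_{\au}(\beta)\ \le\ \bigl(d^{-}_{\au_H}(\beta_H)-1\bigr)\bigl(d^{-}_{\au_K}(\beta_K)-1\bigr)+1 .
\]

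Next I would locate $\beta$. An in-arc of a state $(u,v)$ of $\au$ projects, injectively, to a pair consisting of an in-arc of $u$ in $\au_H$ and an in-arc of $v$ in $\au_K$ carrying the same letter, so $d^{-}_{\au}((u,v))\le d^{-}_{\au_H}(u)\,d^{-}_{\au_K}(v)$; moreover $d^{-}_{\au_H}(u)=1$ unless $u=\beta_H$, since $u$ is accessible in $\au_H$ and is not a bpi (and similarly for $v$). Hence if neither coordinate of $\beta$ is a bpi then $d^{-}_{\au}(\beta)\le1$, contradicting that $\beta$ is a bpi; and if exactly one coordinate is a bpi, say $\beta=(\beta_H,v)$ with $v\neq\beta_K$, then $d^{-}_{\au}(\beta)\le d^{-}_{\au_H}(\beta_H)=\widetilde{rk}(H)+1\le\widetilde{rk}(H)\widetilde{rk}(K)+1$ because $\widetilde{rk}(K)\ge1$. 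So only the case $\beta=(\beta_H,\beta_K)$ needs work. For this case I would use the structure of a semi-flower automaton with a unique bpi (extractable from \cite{giam07,giam08}): since no cycle through the bpi can avoid $1$, for each predecessor $p$ of $\beta_H$ there is a unique path $1\to p$ that does not revisit $1$ (its other intermediate states all having in-degree $1$); denoting its label by $\sigma_p$, the labels of \emph{all} paths $1\to p$ form exactly $H\sigma_p$, and likewise $K\tau_q$ for $\au_K$. Combining this with the injection above, $d^{-}_{\au}(\beta)$ equals the number of pairs $(i,j)$, with $i$ indexing the in-arcs of $\beta_H$ and $j$ those of $\beta_K$, whose arcs carry a common letter and whose sources $p_i,q_j$ satisfy $H\sigma_{p_i}\cap K\tau_{q_j}\neq\emptyset$ --- equivalently, for which $h\sigma_{p_i}=k\tau_{q_j}$ for some $h\in H$, $k\in K$.

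It then remains to show that at least $\widetilde{rk}(H)+\widetilde{rk}(K)=\bigl(d^{-}_{\au_H}(\beta_H)-1\bigr)+\bigl(d^{-}_{\au_K}(\beta_K)-1\bigr)$ of the $d^{-}_{\au_H}(\beta_H)\,d^{-}_{\au_K}(\beta_K)$ candidate pairs are excluded --- either because the two in-arcs carry different letters, or because the word equation $h\sigma_{p_i}=k\tau_{q_j}$ has no solution over $H$ and $K$. The labels of the paths $1\to\beta_H$ that do not revisit $1$ form a prefix-free family (no such path revisits $\beta_H$ either), and similarly for $\beta_K$; I would try to combine this prefix-freeness, the determinism of $\au_H$ and $\au_K$, and --- crucially --- the fact that $\au$ has only a \emph{single} bpi, to exhibit, for instance, one in-arc of $\beta_H$ that pairs validly with no in-arc of $\beta_K$ together with one in-arc of $\beta_K$ that pairs validly with at most one in-arc of $\beta_H$ (these already account for $\widetilde{rk}(H)+\widetilde{rk}(K)$ exclusions), or a cleaner double count over the two factors. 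This last combinatorial estimate on $d^{-}_{\au}(\beta)$ is the step I expect to be the genuine obstacle; the reduction to it via Theorem~\ref{th2.2} and the handshake identity is routine bookkeeping.
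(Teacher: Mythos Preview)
This theorem is quoted from \cite{giam08} without proof here, but the argument is visible in the proof of Theorem~\ref{2bhn} (simply omit the $(m-1)(l-1)$ term) and rests on Propositions~\ref{result-1}--\ref{result-3}. That route is quite different from yours: it works with \emph{out}-degrees (the sets $BPO_i$), not in-degrees. One writes $\widetilde{rk}(L(\au))=|\mathcal F|-|Q|=\sum_{t\ge2}(t-1)\,|BPO_t(\au)|$ for each of the three automata (Theorem~\ref{th2.2} plus Proposition~\ref{result-1}), bounds $|BPO_t(\au_H\times\au_K)|$ by the number of pairs $(u,v)$ with out-degree at least $t$ in each factor (Proposition~\ref{result-2}; immediate from determinism, since $(u,v)$ has an $a$-out-arc iff both coordinates do), and finishes with the purely numerical inequality of Proposition~\ref{result-3}. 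No case split on the location of the product bpi is needed and no exclusion count arises.

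Your reduction via the handshake identity to $d^{-}_{\au}(\beta)\le(d^{-}_{\au_H}(\beta_H)-1)(d^{-}_{\au_K}(\beta_K)-1)+1$ is correct, and the cases $\beta\ne(\beta_H,\beta_K)$ are handled cleanly. But the main case $\beta=(\beta_H,\beta_K)$ is left open: you yourself flag it as ``the genuine obstacle'' and only speculate that one might exhibit an in-arc of $\beta_H$ pairing with nothing and an in-arc of $\beta_K$ pairing with at most one partner. You give no argument that such arcs must exist, and it is not clear how the single hypothesis ``$\au$ has a unique bpi'' would be converted into precisely this configuration. The underlying reason your route is harder is the asymmetry of the deterministic product: the out-degree of $(u,v)$ is at most $\min(d^+(u),d^+(v))$, so the $BPO$ sums factor nicely, whereas the in-degree is only bounded by the \emph{product} $d^-(u)\,d^-(v)$, forcing the kind of ad hoc exclusion argument you were unable to complete. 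The out-degree approach of \cite{giam08} sidesteps this difficulty entirely.
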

Here, for automata $\au = (Q, 1, 1, \mathcal{F})$ and $\au' = (Q', 1', 1', \mathcal{F}')$ both over an alphabet $A$,  $\au  \times \au'$ is the \emph{product automaton} $(Q \times Q', (1, 1'), (1, 1'), \widetilde{\mathcal{F}})$ over the alphabet $A$ such that
\[((p, p'), a, (q, q')) \in \widetilde{\mathcal{F}} \Longleftrightarrow (p, a, q) \in \mathcal{F} \; \mbox{ and } (p', a, q' ) \in \mathcal{F}'\] for all $p, q \in Q$, $p', q' \in Q'$ and $a \in A$.

Notice that if $\au$ and $\au'$ are deterministic then so is $\au \times \au'$. But if $\au$ and $\au'$ are trim, then $\au \times \au'$ need not be trim. However, by considering only those states which are accessible and coaccessible, we can make the product automaton $\au \times \au'$ trim. This process does not alter the language accepted by $\au \times \au'$. In fact, we have
\[L(\au \times \au') = L(\au) \cap L(\au').\] Hence, if we state a product automaton $\au \times \au'$ is semi-flower, we assume that the trim part of $\au \times \au'$, without any further explanation.

In the hypothesis of Theorem \ref{giamhnp}, if the product automaton has more than one bpi, then it is not true that $H$ and $K$ satisfy HNP. This has been shown through certain examples in \cite{giam07,giam08}.  In the present work, first we observe that HNP fails if the product automaton has two bpi's. We demonstrate this in Example \ref{2bpi-ctr}. Then we proceed to investigate on the conditions to achieve HNP in case the product automaton has two bpi's.

We would require the following supplementary results from \cite{giam08} in our main results. Instead of reworking the details, we simply state in the required form. In these results, let the automata be over an alphabet $A$ of cardinality $n$; and for an automaton $\au = (Q, I, T, \mathcal{F})$ and $i \ge 0$
\[BPO_i(\au) = \{q \in Q\; |\; \mbox{out degree of } q = i\}.\]

\begin{proposition}\label{result-1}
 If $\au = (Q,1,1,\mathcal{F})$ is a deterministic semi-flower automaton over $A$, then
\[|\mathcal{F}| - |Q| = \displaystyle\sum_{i = 2}^n |BPO_i(\au)|(i - 1).\]
\end{proposition}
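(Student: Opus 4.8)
The plan is to count the transitions of $\au$ by grouping them according to their source state. Since $\au$ is deterministic, for each state $q \in Q$ and each letter $a \in A$ there is at most one transition $(q,a,r)$, so the number of transitions leaving $q$ is precisely the out-degree of $q$, which is some integer between $0$ and $n$. Summing over all states gives $|\mathcal{F}| = \sum_{q \in Q} (\text{out degree of } q) = \sum_{i=0}^{n} i\,|BPO_i(\au)|$, and since $\sum_{i=0}^{n} |BPO_i(\au)| = |Q|$, subtracting yields $|\mathcal{F}| - |Q| = \sum_{i=0}^{n} (i-1)|BPO_i(\au)| = -|BPO_0(\au)| + \sum_{i=2}^{n}(i-1)|BPO_i(\au)|$. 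Thus the claimed identity is equivalent to showing $|BPO_0(\au)| = 0$, i.e. that $\au$ has no state of out-degree zero.

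The key step, then, is to rule out states of out-degree $0$. Here I would invoke the semi-flower hypothesis: $\au$ is trim, so every state $q$ is coaccessible, meaning there is a path from $q$ to the unique final state $1$. If $q \neq 1$, such a path has positive length and its first transition leaves $q$, so the out-degree of $q$ is at least $1$. For $q = 1$ itself, I need a separate argument: since $\au$ is semi-flower and trim with $1$ both initial and final, and since a semi-flower automaton (being trim) must contain at least one cycle through $1$ unless it accepts only $\{\varepsilon\}$ — but in either case one should check that $1$ has out-degree at least $1$. Actually the cleanest route is: if $1$ had out-degree $0$ then $Q = \{1\}$ (every state is accessible from $1$, but nothing is reachable), and then $|\mathcal{F}| = 0 = |Q| - 1$... wait, $|Q| = 1$ gives $|\mathcal{F}| - |Q| = -1 \neq 0$. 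So I must argue this degenerate case cannot arise, or handle it: if $Q = \{1\}$ and $\mathcal{F} = \emptyset$ then there are no cycles, contradicting... hmm, actually a one-state automaton with no loops is still trim and vacuously semi-flower. I would resolve this by noting that such an automaton accepts $\{\varepsilon\}$, whose minimal generating set is empty, and the formula's right-hand side is $0$ while the left-hand side is $-1$; so either the intended convention excludes this trivial case, or (more likely) the statement implicitly assumes $\mathcal{F} \neq \emptyset$. Assuming $\mathcal{F} \neq \emptyset$, there is a transition somewhere, hence (by accessibility and coaccessibility) a path through $1$ using it, hence a cycle through $1$, hence the out-degree of $1$ is at least $1$.

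The main obstacle I anticipate is precisely pinning down the out-degree of the initial-final state $1$ and handling the trivial automaton correctly; the rest is a routine double-counting argument. Once $BPO_0(\au) = \emptyset$ is established, the identity follows immediately by rearranging the sum $|\mathcal{F}| = \sum_{i} i\,|BPO_i(\au)|$ against $|Q| = \sum_i |BPO_i(\au)|$.
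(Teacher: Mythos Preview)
The paper does not actually prove this proposition; it is quoted without proof from Giambruno and Restivo (see the sentence immediately preceding the three supplementary propositions: ``Instead of reworking the details, we simply state in the required form''). So there is no proof in the present paper to compare your attempt against.

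That said, your argument is the standard one and is correct. Partitioning $Q$ by out-degree and using $|\mathcal{F}| = \sum_{i\ge 0} i\,|BPO_i(\au)|$ together with $|Q| = \sum_{i\ge 0} |BPO_i(\au)|$ reduces the claim to $BPO_0(\au) = \emptyset$, and coaccessibility handles every state other than $1$. Your hesitation about the degenerate automaton with $Q=\{1\}$ and $\mathcal{F}=\emptyset$ is well placed: the identity does fail there (left side $-1$, right side $0$). In the source paper and throughout this one, however, semi-flower automata are implicitly taken to recognize a nontrivial submonoid, so $\mathcal{F}\neq\emptyset$; under that standing assumption your argument that $1$ then lies on some cycle, hence has positive out-degree, completes the proof.
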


\begin{proposition}\label{result-2} Let $\au_1$ and $\au_2$ be two deterministic automata over $A$. If
$c_i = |BPO_i(\au_1)|$ and $d_i = |BPO_i(\au_2)|$, for each $i = 1, \ldots, n$, then
\[|BPO_t(\au_1\times \au_2)|\leq\displaystyle\sum_{t\leq r,s\leq n}c_r d_s.\]
\end{proposition}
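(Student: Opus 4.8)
The plan is to exploit the very simple way transitions behave in a product automaton. For a state $(p, p')$ of $\au_1 \times \au_2$ and a letter $a \in A$, there is an $a$-labelled transition out of $(p, p')$ precisely when there is an $a$-labelled transition out of $p$ in $\au_1$ \emph{and} an $a$-labelled transition out of $p'$ in $\au_2$; since both automata are deterministic, such a transition is unique whenever it exists. Hence the out-degree of $(p, p')$ in $\au_1 \times \au_2$ equals the number of letters labelling a transition out of $p$ and simultaneously a transition out of $p'$, which is at most $\min\{\deg^+(p), \deg^+(p')\}$, where $\deg^+$ denotes out-degree in the respective factor.

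First I would record the immediate consequence: if $(p, p') \in BPO_t(\au_1 \times \au_2)$, i.e.\ the out-degree of $(p, p')$ in the product equals $t$, then by the inequality above $\deg^+(p) \ge t$ in $\au_1$ and $\deg^+(p') \ge t$ in $\au_2$. Thus $p$ lies in $BPO_r(\au_1)$ for some $r$ with $t \le r \le n$ (the bound $r \le n$ being forced by $|A| = n$), and likewise $p'$ lies in $BPO_s(\au_2)$ for some $s$ with $t \le s \le n$. In other words,
\[ BPO_t(\au_1 \times \au_2) \ \subseteq\ \Big(\bigcup_{r=t}^{n} BPO_r(\au_1)\Big) \times \Big(\bigcup_{s=t}^{n} BPO_s(\au_2)\Big). \]
If one prefers to work with the trimmed product automaton, restricting to the accessible and coaccessible states only deletes states from the left-hand side (and can only lower out-degrees), so the inclusion is preserved.

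Then it remains only to count. The sets $BPO_i(\au_1)$ are pairwise disjoint in $i$, and likewise the sets $BPO_i(\au_2)$, so the two unions above have cardinalities $\sum_{r=t}^{n} c_r$ and $\sum_{s=t}^{n} d_s$ respectively. Taking cardinalities in the displayed inclusion yields
\[ |BPO_t(\au_1 \times \au_2)| \ \le\ \Big(\sum_{r=t}^{n} c_r\Big)\Big(\sum_{s=t}^{n} d_s\Big) \ =\ \sum_{t \le r,\, s \le n} c_r d_s, \]
which is exactly the asserted bound.

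There is no deep obstacle here; the two points that require care are the direction of the out-degree inequality in the product (it is bounded by the \emph{minimum}, not the product, of the factor out-degrees) and the bookkeeping step that converts ``out-degree $= t$ in the product'' into ``out-degree $\ge t$ in each factor'', which is what makes the ranges $t \le r \le n$ and $t \le s \le n$ appear. Making the trimming remark fully rigorous is the only spot that wants a sentence of explanation rather than a one-line computation.
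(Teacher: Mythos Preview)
Your argument is correct. The paper itself does not prove this proposition; it is imported verbatim from \cite{giam08} with the remark ``Instead of reworking the details, we simply state in the required form.'' The natural proof in \cite{giam08} is exactly the one you give: in a deterministic product the out-degree of $(p,p')$ is $|\{a \in A : p \text{ and } p' \text{ both have an } a\text{-transition}\}| \le \min(\deg^+(p),\deg^+(p'))$, forcing each coordinate of a state in $BPO_t(\au_1\times\au_2)$ to have out-degree at least $t$, and then one counts pairs.
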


\begin{proposition}\label{result-3}
Let  $\langle c_1,\ldots,c_n \rangle$ and $\langle d_1,\ldots,d_n \rangle$ be two finite sequences of natural numbers; then
\[\sum_{t = 2}^n(t-1)\left(\sum_{t\leq r\leq n}c_r \sum_{t\leq s\leq n} d_s \right) \leq
\left(\sum_{i = 2}^n (i-1)c_i\right)\left(\sum_{j = 2}^n (j-1)d_j \right).\]
\end{proposition}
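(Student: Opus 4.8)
The plan is to reduce both sides to expressions involving the tail sums of the two sequences. Set $C_t = \sum_{r=t}^n c_r$ and $D_t = \sum_{s=t}^n d_s$ for $2 \le t \le n$; since the $c_r$ and $d_s$ are non-negative, both $(C_t)$ and $(D_t)$ are non-increasing in $t$. With this notation the left-hand side is exactly $\sum_{t=2}^n (t-1)\, C_t D_t$.

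The first step is to rewrite the right-hand side in the same currency. Interchanging the order of summation gives
\[\sum_{t=2}^n C_t = \sum_{t=2}^n \sum_{r=t}^n c_r = \sum_{r=2}^n c_r \sum_{t=2}^r 1 = \sum_{r=2}^n (r-1)\, c_r,\]
and likewise $\sum_{t=2}^n D_t = \sum_{s=2}^n (s-1)\, d_s$. Hence the right-hand side equals $\left(\sum_{t=2}^n C_t\right)\left(\sum_{t=2}^n D_t\right) = \sum_{t=2}^n \sum_{j=2}^n C_t D_j$.

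The second step bounds the left-hand side termwise. For each fixed $t$ the sum $\sum_{j=2}^t D_j$ has $t-1$ summands, each of which is at least $D_t$ because $(D_j)$ is non-increasing; thus $(t-1)\, D_t \le \sum_{j=2}^t D_j$. Multiplying by $C_t \ge 0$ and summing over $t$,
\[\sum_{t=2}^n (t-1)\, C_t D_t \;\le\; \sum_{t=2}^n C_t \sum_{j=2}^t D_j \;\le\; \sum_{t=2}^n C_t \sum_{j=2}^n D_j \;=\; \left(\sum_{t=2}^n C_t\right)\left(\sum_{t=2}^n D_t\right),\]
where the middle inequality holds because the right-hand sum merely contains the additional non-negative terms $C_t D_j$ with $j > t$. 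Combining this with the identity from the first step yields the claim.

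There is no serious obstacle here; the only genuine idea is the passage to tail sums, after which the estimate is just monotonicity together with nonnegativity. The one thing to watch is the bookkeeping in the double-sum manipulations — the index ranges when interchanging the order of summation, and the fact that $\sum_{j=2}^t D_j$ indeed has precisely $t-1$ terms — but these are routine.
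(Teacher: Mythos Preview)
Your argument is correct. The passage to the tail sums $C_t=\sum_{r\ge t}c_r$ and $D_t=\sum_{s\ge t}d_s$, the Abel-type identity $\sum_{t=2}^n C_t=\sum_{r=2}^n (r-1)c_r$, and the monotonicity estimate $(t-1)D_t\le\sum_{j=2}^{t}D_j$ are all sound, and the final chain of inequalities assembles them cleanly.

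As for comparison: the paper does not actually prove this proposition. It is quoted from Giambruno and Restivo \cite{giam08} as one of the ``supplementary results'' and is simply stated without argument. So there is no competing approach in the paper to compare against; your proof stands as a self-contained verification of a result the authors import.
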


\section{Main Results}

In this section we present two results. First we obtain the rank of a finitely generated submonoid of a free monoid, if it is accepted by a semi-flower automaton with exactly two bpi's. This generalizes the result of Giambruno and Restivo for semi-flower automata with a unique bpi.  Then we proceed to
obtain HNP for the submonoids of a free monoid that satisfy the condition $C$.

We begin with introducing a concise notation for a semi-flower automaton in which only the initial-final state, bpi's and the respective paths between them will be represented along with their labels. We call this as \emph{\underline{b}pi's and \underline{p}aths \underline{r}epresentation}, in short \emph{BPR}, of an automaton. For example, the BPR of the semi-flower automaton given in \textsc{Figure} \ref{fig3} is shown in \textsc{Figure} \ref{fig4}.
\begin{figure}[htp]
\entrymodifiers={++[o][F-]} \SelectTips{cm}{}
\[\xymatrix{*\txt{} & *\txt{} & *\txt{} \ar[r] &  *++[o][F=]{1} \ar[lld]_a \ar[rrd]^b\\
*\txt{} & \ar[ld]_a \ar[rd]^b & *\txt{} & \ar[u]^a & *\txt{} &  \ar[ld]_a \ar[rd]^b \ar[dd]^a \\
\ar[rr]^b \ar[rd]_a & *\txt{} & q \ar[r]^a  & \ar[u]_b & \ar[rd]^b & *\txt{} & \ar[ld]^a \\
*\txt{} & \ar[ur]^b \ar[dr]_a & *\txt{} & \ar[ul]_a & \ar[l]_b & p \ar[l]_a \ar[ld]^b \\
*\txt{} & *\txt{}  & \ar[uu]^b & \ar[uul]^b & \ar[l]^a}\]
\caption{A Semi-Flower Automaton}
\label{fig3}
\end{figure}
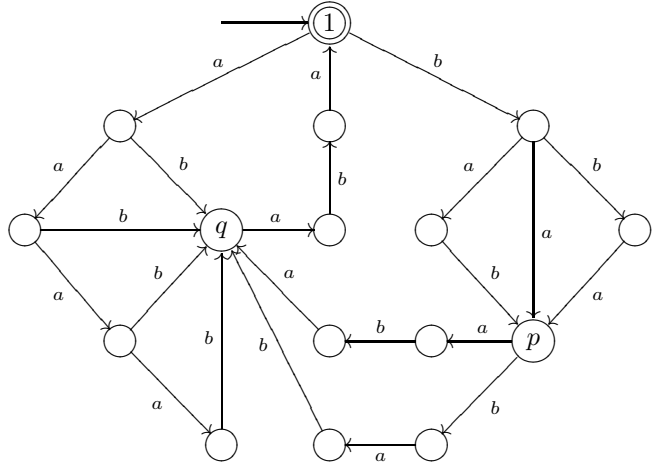

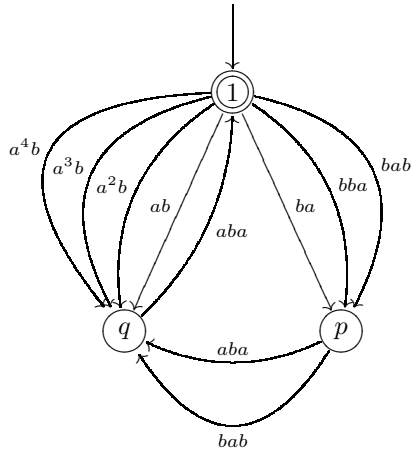
\begin{figure}[htp]
\entrymodifiers={++[o][F-]} \SelectTips{cm}{}
\[\xymatrix{  *\txt{} &  *\txt{} \ar[d]\\
 *\txt{} &  *++[o][F=]{1} \ar[dddr]^{ba} \ar@/^1.5pc/[dddr]^{bba} \ar@/^3pc/[dddr]^{bab} \ar[dddl]_{ab} \ar@/_1.5pc/[dddl]_{a^2b}
\ar@/_3pc/[dddl]_{a^3b} \ar@/_4.5pc/[dddl]_{a^4b} \\
*\txt{}\\
*\txt{}\\
q \ar@/_1pc/[uuur]_{aba} & *\txt{} & p \ar@/^1pc/[ll]_{aba} \ar@/^3pc/[ll]^{bab}}\]
\caption{BPR of the Semi-Flower Automaton given in \textsc{Figure} \ref{fig3}}
\label{fig4}
\end{figure}

The following lemma is useful for obtaining the rank of a semi-flower automaton with two bpi's.

\begin{lemma}
If $\au$ is a semi-flower automaton with exactly two bpi's, say $p$ and $q$ such that the distance from $q$ to the $1$
is not more than that of $p$, then
\begin{enumerate}
\item[\rm(i)] there is a unique simple path from $q$ to $1$, and
\item[\rm(ii)] every cycle in $\au$ visits $q$.
\end{enumerate}
\end{lemma}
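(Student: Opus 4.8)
The plan is to exploit the defining property of a semi-flower automaton --- every cycle visits the initial-final state $1$ --- together with the assumption that $p$ and $q$ are the only branch points going in. I would set up the argument around the structure of paths from $q$ to $1$, using that any two such paths together would force a cycle and hence a repeated visit to $1$, which in turn constrains where branch points can occur.

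\textbf{Part (i).} Since $\au$ is trim, $q$ is coaccessible, so at least one simple path from $q$ to $1$ exists; call it $P$. Suppose $P'$ is a second simple path from $q$ to $1$, distinct from $P$. Let $r$ be the last state at which $P$ and $P'$ agree before they diverge (possibly $r=q$), and let $s$ be the first state after $r$ at which they meet again (this exists since both end at $1$). Then the two subpaths of $P$ and $P'$ between $r$ and $s$ are distinct, so $s$ has indegree at least $2$; hence $s \in \{p,q,1\}$. I would rule out $s=q$ and $s=1$ using coaccessibility and the distance hypothesis: if $s=1$, then concatenating one of these subpaths with a path from $1$ back to $r$ (which exists since $1$ is accessible and, in a semi-flower automaton, every state lies on a cycle through $1$, so $1$ reaches $r$) produces a cycle avoiding $1$ unless handled carefully --- more cleanly, two distinct paths $r \to 1$ already contradict that a path $r \to 1 \to r$ would be a cycle not simple in $1$; I will instead argue directly that a second path $q \to 1$ yields, after appending a fixed path $1 \to q$ guaranteed by the semi-flower structure, two distinct cycles at $1$ through $q$, and by splicing them we obtain a cycle not passing through $1$, a contradiction. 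The distance hypothesis is what pins down that the "extra" branching would have to happen at $q$ or $p$ and then be propagated, and I expect the bookkeeping here to be the main obstacle: carefully choosing $r$, $s$ and showing the forced branch point cannot be accommodated without creating a cycle that misses $1$.

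\textbf{Part (ii).} Let $\gamma$ be any cycle in $\au$. By the semi-flower property $\gamma$ visits $1$, so write $\gamma$ as a concatenation of cycles simple in $1$. It therefore suffices to show every cycle simple in $1$ visits $q$. Suppose $\delta$ is a cycle simple in $1$ that avoids $q$. Then every state of $\delta$ other than $1$ has its incoming arc along $\delta$; combined with trimness, each such state is coaccessible via some path to $1$. Using part (i) and the fact that $p$ is the only other possible branch point, I would trace coaccessibility paths from the states of $\delta$ and show they are forced to route through $q$ (since the unique simple path from $q$ to $1$ is "short" by hypothesis, and $p$'s outgoing structure must ultimately feed into it), producing a path from a state of $\delta$ to $q$ and hence, concatenated with the $q \to 1$ path and the remaining arc of $\delta$ back to that state, a cycle avoiding $1$ --- contradiction. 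Again the delicate point is handling the interaction of the two branch points $p$ and $q$; the distance condition ($q$ no farther from $1$ than $p$) should guarantee that $p$ cannot "shield" a $q$-avoiding cycle, and verifying this is where the real work lies.
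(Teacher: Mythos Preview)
Your proposal has a genuine gap: you never isolate the one observation that drives both parts, namely that when $q \ne 1$ the initial--final state $1$ is \emph{not} a bpi. Since the only bpi's are $p$ and $q$, and $q$ is at distance no greater than $p$ from $1$, if $1$ were a bpi it would have to equal $q$. So as soon as $q \ne 1$, the state $1$ has indegree exactly one. Without this, your splicing manoeuvres cannot work: two cycles through $1$ do not splice into a cycle avoiding $1$, and ``a path $r\to 1\to r$ would be a cycle not simple in $1$'' is not a contradiction of anything.

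For part~(i) the paper's argument is: take two hypothetical simple paths $P_1,P_2$ from $q$ to $1$ and look at their longest common \emph{suffix} $P'$. Because $1$ is not a bpi, $P'$ is nonempty; its starting state $i(P')$ receives two distinct incoming arcs (one from each of $P_1,P_2$), so $i(P')$ is a bpi lying strictly between $q$ and $1$ along a simple path, contradicting the choice of $q$. Your ``first re-meeting point $s$'' is the dual idea (common prefix rather than suffix), but you stalled exactly because you allowed $s=1$; once you know $1$ is not a bpi that case vanishes, and $s=q$ is already excluded by simplicity of $P,P'$.

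For part~(ii) the paper's argument is immediate from the same observation. A cycle avoiding $q$ still visits $1$, hence contributes an arc into $1$; the path from $q$ to $1$ (which exists by~(i)) contributes an arc into $1$ as well. These two arcs cannot coincide all the way back to $q$, since the cycle avoids $q$; so $1$ (or some state between $q$ and $1$) would be a bpi, contradicting the choice of $q$. There is no need to decompose into simple cycles, trace coaccessibility paths, or analyse how $p$ interacts with $q$.
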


\begin{proof}\
\begin{enumerate}
\item[(i)] If $q = 1$, then we are done. If not, by the choice of $q$, the initial-final state $1$ is not a bpi. Moreover, since $q$ is coaccessible, there is a path from $q$ to $1$. Now suppose there are two different paths $P_1$ and $P_2$ with labels $u$ and $v$, respectively, from $q$ to $1$. Let $w$ be the label of longest suffix path $P'$ which is in common between the paths $P_1$ and $P_2$. As $1$ is not a bpi, $w \neq \varepsilon$. But then $i(P')$ will be a bpi different from $q$. This a contradiction to the choice of $q$. Thus, there is a unique simple path from $q$ to $1$.

\item[(ii)] Since every cycle in $\au$ passes through $1$, if $q = 1$, then we are done. If not, $1$ is not a bpi. Now suppose there is a cycle that is not visiting $q$. Then the cycle contributes one to the indegree of the state $1$. Also, from above (i), there is a path from $q$ to $1$. This implies that the state $1$ is a bpi; a contradiction.
\end{enumerate}
\end{proof}

\begin{corollary}
Every cycle that visits $p$ also visits $q$ and hence, if $p$ and $q$ are distinguishable, the distance from $p$ to $1$ is more than that of $q$.
\end{corollary}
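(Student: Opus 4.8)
The first assertion needs nothing new: part~(ii) of the Lemma already says that every cycle of $\au$ visits $q$, so in particular every cycle that visits $p$ visits $q$.

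For the second assertion I would argue by contradiction. Write $d(s)$ for the distance from a state $s$ to $1$, and assume $p$ and $q$ are distinguishable --- in particular $p \neq q$ --- but $d(p) = d(q)$, say both equal to $d$. The case $q = 1$ is immediate (then $d(q) = 0$, and $p \neq 1$ since $p, q$ are distinct bpi's, so $d(p) \ge 1 > 0$, contradicting $d(p) = d(q)$); so I may assume $q \neq 1$, whence also $p \neq 1$ because $d \ge 1$. As the only bpi's are $p$ and $q$, the state $1$ is then not a bpi and so has indegree exactly $1$. Fix a shortest path $\rho$ from $p$ to $1$ and a shortest path $\pi$ from $q$ to $1$; each has length $d$ and each is simple.

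I would then read $\rho$ and $\pi$ backwards from their common last state $1$ and compare. Inductively, as long as the state currently reached is not a bpi --- which holds of $1$ itself, and $1$ is not met again because $\rho$ and $\pi$ are simple --- its indegree is $1$, so the arc entering it, and hence the previous state, is forced; therefore $\rho$ and $\pi$ agree along a common suffix up to and including the first bpi they reach. That bpi is $p$ or $q$, and it occurs on both $\rho$ and $\pi$ at the same distance $k$ from $1$, with $1 \le k \le d$. If it is $p$, then the terminal segment of $\pi$ starting at that occurrence of $p$ is a path from $p$ to $1$ of length $k$, so $d \le k$ and hence $k = d$; but then this occurrence of $p$ is the initial state of $\pi$, namely $q$, so $p = q$ --- a contradiction. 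If instead it is $q$, the argument is symmetric, with $\rho$ and $\pi$ interchanged. So $d(p) = d(q)$ is impossible, and therefore $d(p) > d(q)$.

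The one genuinely delicate step is this backward comparison: one has to argue rigorously that $\rho$ and $\pi$ really do share a suffix terminating exactly at the first bpi they reach, and that $1$ is not revisited before that. This is an indegree bookkeeping of exactly the kind already used in the proof of part~(i) of the Lemma, so I would reuse it; everything else is routine.
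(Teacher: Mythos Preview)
Your argument is correct. The paper states this corollary without proof, treating it as an immediate consequence of the Lemma, so there is no ``paper's proof'' to compare against in detail. Your handling of the first assertion is exactly the intended one-line deduction from part~(ii).

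For the second assertion, your contradiction argument via simultaneous backward tracing along $\rho$ and $\pi$ is sound; the indegree bookkeeping you flag as ``delicate'' is indeed the same mechanism used in the proof of part~(i), and it goes through. A marginally more direct variant avoids fixing two paths: simply trace backward from $1$ once (using that $1$ and all non-bpi states have indegree~$1$) until you hit the first bpi $b$; then every path to $1$ must pass through $b$, so the other bpi $b'$ satisfies $d(b') > d(b)$, and the standing convention $d(q)\le d(p)$ forces $b=q$, $b'=p$. This is the same idea with one path instead of two, but the difference is cosmetic rather than substantive.
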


\begin{notation}\label{nota}
In what follows, if a semi-flower automaton has exactly two bpi's, say $p$ and $q$, then we consider that the distance from $q$ to $1$ is not
more than that of $p$. Moreover, we assume that the indegree of $p$ is $m$ and the indegree of $q$ is $(l + k)$, where $k$
is the number of edges ending at $q$ that are not in any of the paths from $p$ to $q$. With this information, the BPR of such an automaton will be as shown in \textsc{Figure} \ref{fig5}.
\begin{figure}[htp]
\entrymodifiers={++[o][F-]} \SelectTips{cm}{}
\[\xymatrix{  *\txt{} &  *\txt{} \ar[d]\\
 *\txt{} &  *++[o][F=]{1} \ar[dddr]^{\cdots m \cdots} \ar@{..>}@/^1pc/[dddr] \ar@{..>}@/^2pc/[dddr] \ar[dddl]_{\cdots k \cdots} \ar@{..>}@/_1pc/[dddl] \ar@{..>}@/_2pc/[dddl] \\
*\txt{}\\
*\txt{}\\
q \ar@/_1pc/[uuur] & *\txt{} & p \ar@/_1pc/[ll]^{\stackrel{\vdots}{\stackrel{l}{\vdots}}} \ar@{..>}@/^1.5pc/[ll] \ar@{..>}@/^3pc/[ll]}\]
\caption{BPR of a Semi-Flower Automaton with Two bpi's}
\label{fig5}
\end{figure}
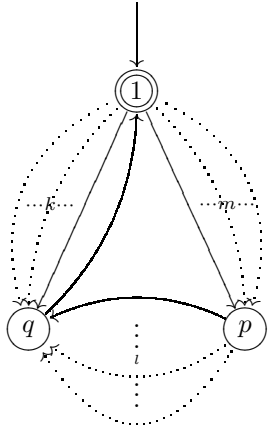
\end{notation}

Now we are ready to present our first result of the paper.

\begin{theorem}\label{2brk}
If $\au$ is a semi-flower automaton with exactly two bpi's $p$ and $q$, then \[rk(L(\au)) \le ml + k.\] Moreover, if $\au$ is deterministic, then
\[rk(L(\au)) = ml + k.\]
\end{theorem}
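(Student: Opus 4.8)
The plan is to count the cycles of $\au$ that are simple in $1$. By Theorem \ref{thm1} the set $Y_{\au}$ of labels of such cycles generates $L(\au)$, so $rk(L(\au))\le |Y_{\au}|$, and when $\au$ is deterministic $Y_{\au}$ is the minimal set of generators, so $rk(L(\au))=|Y_{\au}|$. Thus the whole problem reduces to showing that $|Y_{\au}|\le ml+k$ in general and $|Y_{\au}|=ml+k$ in the deterministic case.

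I would first extract the global shape of $\au$ from the Lemma, the Corollary and Notation \ref{nota} (Figure \ref{fig5}): its BPR consists of exactly $m$ paths $1\to p$, exactly $l$ paths $p\to q$, exactly $k$ paths $1\to q$ that avoid $p$, and the single path $\delta$ from $q$ to $1$ furnished by part (i) of the Lemma; any other path between the distinguished states $1,p,q$ would, on composing it with one of these, produce a cycle missing $q$ or a cycle missing $1$, contradicting part (ii) of the Lemma or the semi-flower hypothesis. In particular I would record the one fact used repeatedly below: there is no path from $p$ to $1$ that avoids $q$, since composing such a path with a path $1\to p$ would give a cycle missing $q$.

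Next I would decompose an arbitrary cycle $C$ that is simple in $1$. By part (ii) of the Lemma $C$ visits $q$, and it visits each of $p$ and $q$ at most once, a second visit splitting off a cycle missing $1$. If $C$ visits $p$, it cannot visit $q$ before $p$ --- otherwise the final stretch of $C$ from $p$ to $1$ would avoid the (already used) state $q$, which is impossible --- so, reading $C$ from $1$, it runs $1\to p$, then $p\to q$ (the next distinguished state after $p$, since it meets neither $1$ nor $p$ again and cannot go from $p$ to $1$ avoiding $q$), then $q\to 1$ along $\delta$; hence $C=\beta_i\gamma_j\delta$ with $\beta_i$ one of the $m$ paths $1\to p$ and $\gamma_j$ one of the $l$ paths $p\to q$, and this factorisation is forced. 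If $C$ does not visit $p$, then $C=\alpha_t\delta$ with $\alpha_t$ one of the $k$ paths $1\to q$ avoiding $p$. Different choices give different cycles, so there are exactly $ml+k$ cycles simple in $1$; therefore $|Y_{\au}|\le ml+k$, and if $\au$ is deterministic then distinct cycles (each starting at $1$) have distinct labels, so $|Y_{\au}|=ml+k$.

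I expect the main obstacle to be making the case analysis for $C$ genuinely exhaustive --- proving that every cycle simple in $1$ factors in exactly one of the two ways above and that no further configuration can arise --- and dealing with the degenerate situation $q=1$: there $\delta$ is the null path, the ``paths $1\to q$ avoiding $p$'' are precisely the $k$ cycles simple in $1$ that miss $p$, the ``paths $p\to q$'' are the $l$ paths $p\to 1$, and the same count $ml+k$ is obtained.
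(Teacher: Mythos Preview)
Your proposal is correct and follows essentially the same approach as the paper: count the cycles in $C_\au$ by separating those that visit $p$ (giving $ml$) from those that do not (giving $k$), then appeal to Theorem~\ref{thm1}. Your write-up is in fact more careful than the paper's---you make explicit the order-of-visits argument, the uniqueness of the factorisation, the distinction $|Y_\au|\le|C_\au|$ versus equality, and the degenerate case $q=1$---whereas the paper leaves all of this implicit in the BPR picture of Figure~\ref{fig5}.
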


\begin{proof} As the number of simple cycles passing through the initial-final state $1$ (i.e. in $C_\au$) gives us an upper bound for the rank $rk(L(\au))$, we count these cycles using indegree of $p$ and $q$.
The number of cycles in $C_\au$ that are passing through $q$ but not $p$ is $k$. Also, as each path entering the state $p$ will split into $l$ number of paths and enter in the state $q$, we have $ml$ number of cycles in $C_\au$ that are passing through $p$. Thus, the total number of cycles in $C_\au$ is $ml + k$. Hence, as $L(\au)$ is the submonoid generated by $Y_\au$, we have
\[rk(L(\au)) \le |Y_{\au}| = |C_\au| = ml + k.\]
If $\au$ is deterministic, then by Theorem \ref{thm1}, we have \[rk(L(\au)) = ml + k.\]
\end{proof}

In a semi-flower automaton with two indistinguishable bpi's, i.e. with a unique bpi,  we have the following corollary.

\begin{corollary}
Theorem \ref{th2.2} follows.
\end{corollary}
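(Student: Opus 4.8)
The goal is to recover Theorem~\ref{th2.2} as the special case of Theorem~\ref{2brk} in which the two bpi's $p$ and $q$ of $\au$ coincide into a single bpi. The plan is to set up the degenerate case of Notation~\ref{nota} carefully and then simply substitute into the formula $rk(L(\au)) = ml + k$ (or the inequality in the nondeterministic case), matching it against $|\mathcal{F}| - |Q| + 1$.

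First I would observe that saying $\au$ has two \emph{indistinguishable} bpi's means $p = q$; then in the BPR of Figure~\ref{fig5} all the paths from $p$ to $q$ collapse to the null path, so $l = 1$, while the indegree of this single bpi is $(l+k) = 1 + k$ by Notation~\ref{nota}, and simultaneously its indegree is $m$. Hence $m = l + k = 1 + k$, so $k = m - 1$, and the formula of Theorem~\ref{2brk} gives
\[
rk(L(\au)) \le ml + k = m \cdot 1 + (m-1) = 2m - 1,
\]
with equality when $\au$ is deterministic. Wait --- this needs reconciling with the indegree bookkeeping, so the key step is to be precise about what ``$k$ is the number of edges ending at $q$ that are not in any path from $p$ to $q$'' degenerates to when $p=q$: every edge into the unique bpi is trivially of this kind except for the $l$ copies coming ``from $p$''; with $l=1$ this single copy is itself an edge into the bpi, so one has to decide whether it is double-counted. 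I would handle this by going back to the cycle count in the proof of Theorem~\ref{2brk}: the number of simple cycles in $C_\au$ through the unique bpi is just its indegree, which is what Theorem~\ref{th2.2} is ultimately counting via $|\mathcal{F}| - |Q| + 1$.

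The cleanest route, therefore, is: (1) note $|C_\au|$ equals the indegree $d$ of the unique bpi when there is exactly one bpi, since each edge into $1$ terminates one simple-in-$1$ cycle and there are no other bpi's to create branching; (2) recall from Proposition~\ref{result-1} (or directly) that for a deterministic semi-flower automaton with a single bpi, $|\mathcal{F}| - |Q| = \sum_{i\ge2}|BPO_i(\au)|(i-1)$, and in the two-bpi formula $ml + k$ with $p=q$, $l=1$ one gets $ml + k = m + k$ where $m+k$ counts exactly the edges into the bpi, i.e. $m + k = d$; (3) conclude $rk(L(\au)) = d = |\mathcal{F}| - |Q| + 1$, using that in a semi-flower automaton with a unique bpi of indegree $d$ the relation $d - 1 = |\mathcal{F}| - |Q|$ holds (the bpi contributes $d-1$ to $|\mathcal{F}|-|Q|$ and no other vertex does).

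The main obstacle I anticipate is purely notational rather than mathematical: pinning down exactly how the parameters $m$, $l$, $k$ of Notation~\ref{nota} specialize when the ``distance from $p$ to $1$ is more than that of $q$'' clause fails (because $p$ and $q$ are now the same state at distance $0$), and making sure the edge that in the general picture goes ``from $p$ to $q$'' is counted once and only once among the $d$ edges into the bpi. Once that is settled, the corollary is a one-line substitution, and the inequality version for nondeterministic $\au$ follows identically from the inequality half of Theorem~\ref{2brk}. I would therefore keep the proof of the corollary to two or three sentences, doing the bookkeeping $ml+k = d = |\mathcal{F}|-|Q|+1$ explicitly and citing Theorem~\ref{2brk} for the rest.
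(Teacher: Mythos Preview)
Your overall direction is right, and the final ingredients you list --- $|C_\au|$ equals the indegree $d$ of the unique bpi, and $d = |\mathcal{F}|-|Q|+1$ because every other state has indegree exactly one --- are precisely what the paper uses. But your handling of the degenerate parameters in Notation~\ref{nota} is off, and this is what sends you through the detour.

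The correct specialization is $l=0$, not $l=1$. In Notation~\ref{nota}, $l$ is defined so that the indegree of $q$ equals $l+k$, where $k$ counts the edges into $q$ that do \emph{not} lie on a path from $p$ to $q$; hence $l$ counts the edges into $q$ that \emph{do} lie on such a path. When $p=q$, the only path from $p$ to $q$ is the null path, which has no edges, so no edge into $q$ is of this type: $l=0$. Then $m = l+k = k$ is the indegree of the unique bpi, and the formula of Theorem~\ref{2brk} collapses immediately to $ml+k = 0+k = k$. With $l=1$ you get $m = 1+k$ and $ml+k = m+k = 1+2k$, which is not the indegree; your claim ``$m+k=d$'' in step (2) is therefore false, and the first computation $2m-1$ is a symptom of the same mis-specialization.

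Once you set $l=0$, the argument is exactly the paper's: $rk(L(\au)) \le k$ (equality in the deterministic case), and since all states but the bpi have indegree one, $|\mathcal{F}| = (|Q|-1)+k$, giving $k = |\mathcal{F}|-|Q|+1$. Your steps (1) and (3) already contain this; you can drop the $ml+k$ bookkeeping in step (2) entirely, or fix it to $l=0$.
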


\begin{proof}
In the hypothesis of Theorem \ref{2brk}, if $p = q$  (i.e. $p$ and $q$ are indistinguishable), then $\au$ has a unique bpi. In which case, $l = 0$ and consequently, $rk(L(\au))$ is less than or equal to  the indegree $k$ of the unique bpi. And in case $\au$ is deterministic, $rk(L(\au)) = k$.  Now the number of transitions $|\mathcal{F}|$ in $\au$ can be counted by the number of arcs entering all the states of $\au$. As $\au$ is trim, every state of $\au$ has an arc into it. Further, since $\au$ has a unique bpi, except the bpi, all other states have indegree one. Thus, we have
\[|\mathcal{F}| = |Q| - 1 + k,\] so that \[rk(L(\au)) \le |\mathcal{F}| - |Q| + 1.\] Moreover, if $\au$ is deterministic, then the equality holds.
\end{proof}

\begin{remark}
Theorem \ref{2brk} generalizes Theorem \ref{th2.2}.
\end{remark}

Before proceeding to our second result, it is appropriate to note the following example.

\begin{example}\label{2bpi-ctr}
Consider the submonoids $H = \{aa, aba, ba, bb\}^*$ and $K = \{a, bab\}^*$ of the free monoid $\{a, b\}^*$. We give the automata $\au_H$ and $\au_K$ which accept $H$ and $K$, respectively, in \textsc{Figure} \ref{fig1}. Note that $\au_H$ and $\au_K$ are deterministic semi-flower automata, each with unique bpi. The (trim form of) product automaton $\au_H \times \au_K$ is shown in \textsc{Figure} \ref{fig2}. Clearly, $\au_H \times \au_K$ is semi-flower with exactly two bpi's, viz. $(1, 1')$ and $(1, 3')$ and hence $rk(H \cap K) = 5$. Whereas, $rk(H) = 4$ and $rk(K) = 2$. Thus, $H$ and $K$ do not satisfy HNP, i.e. \[\widetilde{rk}(H\cap K) > \widetilde{rk}(H)\widetilde{rk}(K).\]
\end{example}

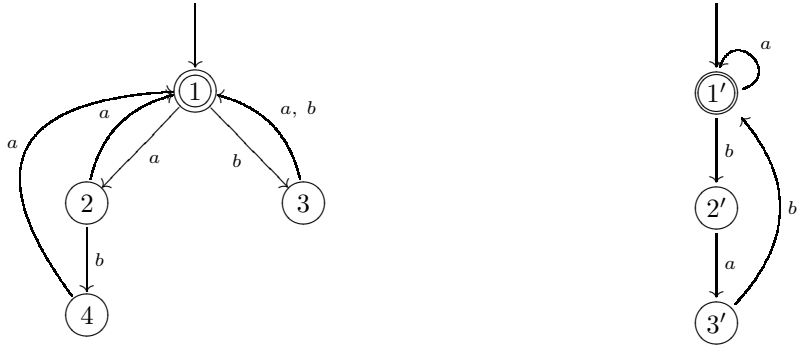
\begin{figure}[htb]
\entrymodifiers={++[o][F-]} \SelectTips{cm}{}
\[\xymatrix{
*\txt{} &  *\txt{}\ar[d]\\
*\txt{} & *++[o][F=]{1}\ar[dl]^a \ar[dr]_b\\
2 \ar@/^1pc/[ur]^{a} \ar[d]^b & *\txt{} & 3 \ar@/_1pc/[ul]_{a, \ b}\\
4 \ar@/^4pc/[uur]^a}
\hspace{4cm}
\xymatrix{
*\txt{} &  *\txt{}\ar[d]\\
*\txt{} & *++[o][F=]{1'}\ar[d]^b  \ar@(r,u)[]_a\\
*\txt{} & 2' \ar[d]^a\\
*\txt{} & 3' \ar@/_2pc/[uu]_b}\]
\caption{$\au_H$ (in the left) and $\au_K$ (in the right) of Example \ref{2bpi-ctr}}
\label{fig1}
\end{figure}

\begin{figure}[htp]
\entrymodifiers={+<3pt>[F-:<3pt>]} \SelectTips{cm}{}
\[\xymatrix{
*\txt{} & (3,2')\ar@/^1pc/[drrr]^a \\
*\txt{} \ar[r] & *+<7pt>[F=:<3pt>]{(1,1')} \ar[u]^b \ar[r]^a & (2,1') \ar@/_1.5pc/[l]^a \ar[r]^b
 & (4,2') \ar[r]^a & (1,3') \ar[d]^b \\
*\txt{} & (4,1') \ar[u]^a & (2,3') \ar[l]^b & (1,2') \ar[l]^a & (3,1') \ar[ulll]^a \ar[l]^b }\]
\caption{$\au_H \times \au_K$ of Example \ref{2bpi-ctr}}
\label{fig2}
\end{figure}

The following lemma is useful in proving our second result of the paper.

\begin{lemma} \label{bpopr}
If $\au = (Q,1,1,\mathcal{F})$ is a semi-flower automaton with exactly two bpi's $p$ and $q$,
then \[rk(L(\au)) -(m - 1)(l - 1) \le |\mathcal{F}|- |Q| + 1.\]
Moreover, if $\au$ is deterministic, then the equality holds.
\end{lemma}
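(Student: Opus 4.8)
The plan is to count the transitions $|\mathcal{F}|$ of $\au$ by summing indegrees over all states, using the structural information encoded in the BPR (Figure~\ref{fig5}) together with Theorem~\ref{2brk}. Recall from Notation~\ref{nota} that the indegree of $p$ is $m$, the indegree of $q$ is $l+k$ where $k$ counts the edges into $q$ that lie on no path from $p$ to $q$, and by Theorem~\ref{2brk} we have $rk(L(\au)) \le ml+k$ with equality in the deterministic case. Since $\au$ is trim, every state has indegree at least $1$; and since $p$ and $q$ are the only bpi's, every state other than $p$ and $q$ has indegree exactly $1$.

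First I would handle the case $p = q$ (a unique bpi), where $l = 0$: the desired inequality $rk(L(\au)) - (m-1)(l-1) \le |\mathcal{F}| - |Q| + 1$ reduces, after noting $(m-1)(0-1) = -(m-1)$, to $rk(L(\au)) + (m-1) \le |\mathcal{F}| - |Q| + 1$; but $m$ here plays the role of the indegree $k$ of the single bpi, so this is essentially the content of the corollary to Theorem~\ref{2brk}, and I would either invoke it or redo the short count $|\mathcal{F}| = |Q| - 1 + (\text{indegree of the bpi})$. So from now on assume $p \neq q$. Then summing indegrees gives
\[
|\mathcal{F}| \;=\; \underbrace{(|Q| - 2)\cdot 1}_{\text{non-bpi states}} \;+\; \underbrace{m}_{\text{into } p} \;+\; \underbrace{(l+k)}_{\text{into } q}\;=\; |Q| - 2 + m + l + k,
\]
so that $|\mathcal{F}| - |Q| + 1 = m + l + k - 1$. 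Combining with $rk(L(\au)) \le ml + k$, it remains to verify the purely arithmetic inequality
\[
ml + k - (m-1)(l-1) \;\le\; m + l + k - 1,
\]
which upon expanding $(m-1)(l-1) = ml - m - l + 1$ becomes an identity: the left side equals $ml + k - ml + m + l - 1 = m + l + k - 1$. Hence $rk(L(\au)) - (m-1)(l-1) \le |\mathcal{F}| - |Q| + 1$, and when $\au$ is deterministic the inequality $rk(L(\au)) \le ml+k$ becomes equality by Theorem~\ref{2brk}, forcing equality throughout.

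The main obstacle I anticipate is not the arithmetic, which collapses cleanly, but making the indegree bookkeeping rigorous: one must be careful that in the BPR the "paths between $1$, $p$, $q$" may pass through auxiliary non-bpi states, and that an edge into $q$ could simultaneously be the terminal edge of several distinct $p$-to-$q$ paths only if it sits past a bpi, which cannot happen — so the decomposition of $q$'s indegree into "$l$ coming via $p$" and "$k$ not via $p$" is genuinely well-defined, and this is exactly where Notation~\ref{nota} and the preceding lemma (every cycle visits $q$, unique simple path from $q$ to $1$) are doing the work. I would also double-check the edge case where some path from $1$ to $p$, or from $p$ to $q$, is a single edge with no intermediate state, to confirm the count $|Q| - 2$ of non-bpi states and the indegree-$1$ claim are unaffected.
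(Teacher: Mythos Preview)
Your argument is correct and follows essentially the same route as the paper: count $|\mathcal{F}|$ as the total indegree to get $|\mathcal{F}| = |Q| - 2 + m + (l+k)$, hence $|\mathcal{F}| - |Q| + 1 = m + l + k - 1$, then rewrite $m + l + k - 1 = (ml+k) - (m-1)(l-1)$ and invoke Theorem~\ref{2brk}. The separate treatment of $p=q$ is unnecessary here since the lemma hypothesizes exactly two bpi's (and indeed your handling of that degenerate case is a bit garbled, since with $l=0$ the inequality would read $rk(L(\au)) + (m-1) \le |\mathcal{F}|-|Q|+1$, which is not what Theorem~\ref{th2.2} gives), but this does not affect the genuine two-bpi argument, which matches the paper's proof.
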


\begin{proof}
Since the number of transitions $|\mathcal{F}|$ of $\au$ is the total indegree (i.e. the sum of indegrees of all the states) of the digraph of $\au$, we have
\[|\mathcal{F}| = m + l + k + |Q| - 2.\] Consequently,
\begin{eqnarray*}
|\mathcal{F}|- |Q| + 1 &=& m + l + k - 1\\
\Longrightarrow \ \  |\mathcal{F}|- |Q| + 1 &=& (ml + k) -(ml - m - l + 1)\\
\Longrightarrow \ \ |\mathcal{F}|- |Q| + 1 &=& (ml + k) -(m - 1)(l - 1).
\end{eqnarray*}
Hence, by Theorem \ref{2brk}, $|\mathcal{F}|- |Q| + 1 \ge rk(L(\au)) -(m - 1)(l - 1).$
\end{proof}

Now, by Theorem \ref{result-1}, we have the following corollary.
\begin{corollary}\label{cor3.4}
If $\au$ is a deterministic semi-flower automaton with exactly two bpi's $p$ and $q$,
then $rk(L(\au))  = (m - 1)(l - 1) + \displaystyle\sum_{t = 2}^n |BPO_t(\au)|(t - 1) + 1.$
\end{corollary}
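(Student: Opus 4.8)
The plan is to combine the equality case of Lemma~\ref{bpopr} with Proposition~\ref{result-1} by a direct substitution. First I would invoke Lemma~\ref{bpopr} in the deterministic setting, which gives the exact identity
\[
rk(L(\au)) - (m-1)(l-1) = |\mathcal{F}| - |Q| + 1.
\]
Rearranging, $rk(L(\au)) = (m-1)(l-1) + (|\mathcal{F}| - |Q|) + 1$.

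Next I would apply Proposition~\ref{result-1}, which is stated precisely for deterministic semi-flower automata over an alphabet of size $n$ and asserts $|\mathcal{F}| - |Q| = \sum_{t=2}^{n} |BPO_t(\au)|(t-1)$. Substituting this expression for $|\mathcal{F}| - |Q|$ into the rearranged identity yields
\[
rk(L(\au)) = (m-1)(l-1) + \sum_{t=2}^{n} |BPO_t(\au)|(t-1) + 1,
\]
which is exactly the claimed formula. Since the corollary is stated under the hypotheses of Lemma~\ref{bpopr} (deterministic semi-flower automaton with exactly two bpi's), both ingredients apply verbatim, so no additional hypotheses need checking.

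There is essentially no obstacle here: this is a two-line algebraic consequence of two previously established results, which is presumably why it is phrased as a corollary. The only point worth a sentence of care is making sure the alphabet size $n$ appearing in Proposition~\ref{result-1} matches the ambient alphabet of $\au$, which it does by the convention fixed just before the statement of that proposition ("let the automata be over an alphabet $A$ of cardinality $n$"). One could also note as a sanity check that when $p=q$ (the unique-bpi case, $l=0$) the term $(m-1)(l-1)$ contributes nothing new beyond what Proposition~\ref{result-1} already records, consistent with the earlier corollary that recovered Theorem~\ref{th2.2}; but this is just a remark, not part of the proof.
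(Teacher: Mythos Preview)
Your proof is correct and is exactly the paper's approach: the corollary is stated immediately after Lemma~\ref{bpopr} with the single justification ``by Proposition~\ref{result-1}'', i.e.\ substitute $|\mathcal{F}|-|Q|=\sum_{t=2}^n |BPO_t(\au)|(t-1)$ into the deterministic equality of Lemma~\ref{bpopr}. One small caution about your closing aside: in the degenerate case $p=q$ with $l=0$ the factor $(m-1)(l-1)=1-m$ is not zero, so the formula of Corollary~\ref{cor3.4} does not literally specialize to the unique-bpi formula under Notation~\ref{nota}; the paper handles that case by a separate direct count rather than by setting $l=0$ here, so this sanity check should be dropped (as you note, it is not part of the argument).
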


\begin{theorem}\label{2bhn}
If $H$ and $K$ are the submonoids accepted by deterministic semi-flower automata $\au_H$ and $\au_K$, respectively, each with a unique bpi
such that the product automaton $\au_H  \times \au_K$ is semi-flower with exactly two bpi's $p$ and $q$, then
\[\widetilde{rk}(H \cap K) \le \widetilde{rk}(H)\widetilde{rk}(K) + (m-1)(l-1).\]
\end{theorem}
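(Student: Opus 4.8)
The plan is to mimic the counting argument of Giambruno and Restivo's proof of Theorem~\ref{giamhnp}, but to track the extra term coming from Lemma~\ref{bpopr}. First I would set up notation: write $n = |A|$, let $c_i = |BPO_i(\au_H)|$ and $d_i = |BPO_i(\au_K)|$ for $i = 1,\dots,n$, and let $b_t = |BPO_t(\au_H \times \au_K)|$. Since $\au_H$ and $\au_K$ are deterministic semi-flower automata with a unique bpi, Theorem~\ref{th2.2} gives $rk(H) = |\mathcal{F}_H| - |Q_H| + 1$ and $rk(K) = |\mathcal{F}_K| - |Q_K| + 1$, and Proposition~\ref{result-1} rewrites these as $rk(H) = \sum_{i=2}^n (i-1)c_i$ and $rk(K) = \sum_{j=2}^n (j-1)d_j$. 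Because $\au_H$ and $\au_K$ have a unique bpi, $rk(H) \ge 1$ and $rk(K) \ge 1$, so $\widetilde{rk}(H) = rk(H) - 1$ and $\widetilde{rk}(K) = rk(K) - 1$; I will need to be a little careful with the edge cases where a rank equals $1$ (then $\widetilde{rk} = 0$), but the inequality should degrade gracefully there.

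Next I would handle the intersection side. Since $\au_H \times \au_K$ is a deterministic (product of deterministic is deterministic) semi-flower automaton with exactly two bpi's $p$ and $q$, Corollary~\ref{cor3.4} applies and gives
\[
rk(H \cap K) = (m-1)(l-1) + \sum_{t=2}^n (t-1)\, b_t + 1,
\]
hence $\widetilde{rk}(H \cap K) = rk(H\cap K) - 1 = (m-1)(l-1) + \sum_{t=2}^n (t-1)\,b_t$ (again assuming $H\cap K$ has rank at least $1$; the trivial case is immediate). So it suffices to prove
\[
\sum_{t=2}^n (t-1)\, b_t \le \widetilde{rk}(H)\,\widetilde{rk}(K) + \big[\,(m-1)(l-1) \text{ already separated}\,\big]
\]
--- wait, more precisely, after subtracting the $(m-1)(l-1)$ from both sides of the target inequality, what remains to show is exactly $\sum_{t=2}^n (t-1)\,b_t \le \widetilde{rk}(H)\widetilde{rk}(K)$. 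This is now purely a combinatorial inequality about the out-degree profiles.

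For that, I would chain Proposition~\ref{result-2} with Proposition~\ref{result-3}. Proposition~\ref{result-2} bounds $b_t = |BPO_t(\au_H\times\au_K)| \le \sum_{t \le r,s \le n} c_r d_s = \big(\sum_{t\le r\le n} c_r\big)\big(\sum_{t\le s\le n} d_s\big)$. Substituting into $\sum_{t=2}^n (t-1) b_t$ and then applying Proposition~\ref{result-3} with the sequences $\langle c_1,\dots,c_n\rangle$ and $\langle d_1,\dots,d_n\rangle$ yields
\[
\sum_{t=2}^n (t-1)\, b_t \le \sum_{t=2}^n (t-1)\Big(\sum_{t\le r\le n} c_r \sum_{t\le s\le n} d_s\Big) \le \Big(\sum_{i=2}^n (i-1)c_i\Big)\Big(\sum_{j=2}^n (j-1)d_j\Big) = rk(H)\,rk(K).
\]
The one wrinkle is that I get $rk(H)\,rk(K)$ on the right, not $\widetilde{rk}(H)\widetilde{rk}(K) = (rk(H)-1)(rk(K)-1)$. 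The main obstacle, then, is squeezing out the difference $rk(H)rk(K) - (rk(H)-1)(rk(K)-1) = rk(H) + rk(K) - 1$: I need a sharper accounting than the crude bound above. I expect this is recovered by being more careful in Proposition~\ref{result-2}/\ref{result-3}, exactly as in Giambruno--Restivo's original HNP proof --- specifically, the product automaton's initial-final state $(1,1')$ is not a bpi and contributes to lowering one of the out-degree sums, and the bpi's of $\au_H$ and $\au_K$ get ``used up'' in forming the two bpi's $p,q$ of the product, which shaves off the linear terms. So the real work is to re-derive the Giambruno--Restivo refinement (the passage from $rk\cdot rk$ down to $\widetilde{rk}\cdot\widetilde{rk}$) and observe that the only place the two-bpi hypothesis changes their argument is the extra additive $(m-1)(l-1)$ already isolated via Corollary~\ref{cor3.4}; everything else goes through verbatim.
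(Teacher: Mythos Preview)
Your approach is exactly the paper's approach, and in fact your proof is already complete---you have just misread Proposition~\ref{result-1} by one. That proposition says $|\mathcal{F}|-|Q| = \sum_{i=2}^n (i-1)c_i$, while Theorem~\ref{th2.2} says $rk(H) = |\mathcal{F}_H| - |Q_H| + 1$; combining these gives $\sum_{i=2}^n (i-1)c_i = rk(H) - 1 = \widetilde{rk}(H)$, not $rk(H)$. So your chain of inequalities already terminates at
\[
\sum_{t=2}^n (t-1)\,b_t \;\le\; \Big(\sum_{i=2}^n (i-1)c_i\Big)\Big(\sum_{j=2}^n (j-1)d_j\Big) \;=\; \widetilde{rk}(H)\,\widetilde{rk}(K),
\]
and no ``squeezing'' from $rk\cdot rk$ down to $\widetilde{rk}\cdot\widetilde{rk}$ is needed. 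The ``main obstacle'' you flag is a phantom; once you correct the off-by-one, your argument is line-for-line the paper's proof.
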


\begin{proof} Note that
\begin{eqnarray*}
\widetilde{rk}(H\cap K) &=& rk(L(\au_H  \times \au_K))  - 1\\
 &=& (m - 1)(l - 1) + \displaystyle\sum_{t = 2}^n |BPO_t(\au_H  \times \au_K)|(t - 1) \ \ \mbox{by Corollary \ref{cor3.4}}\\
 &\leq& (m - 1)(l - 1) + \sum_{t = 2}^n (t - 1)\left(\sum_{t\leq r, s\leq n}c_rd_s \right)  \ \ \mbox{by Proposition \ref{result-2}},
\end{eqnarray*}
where $c_r = |BPO_r(\au_H)|$ and $d_s = |BPO_s(\au_K)|$. Consequently, by Proposition \ref{result-3}
\begin{eqnarray*}
\widetilde{rk}(H\cap K) &\leq& (m - 1)(l - 1) + \left(\sum_{i = 2}^n (i - 1)c_i\right)\left(\sum_{j = 2}^n (j-1)d_j\right)\\
 &=& (m - 1)(l - 1) + \widetilde{rk}(H)\widetilde{rk}(K) \ \ \mbox{by Theorem \ref{th2.2} and Proposition \ref{result-1}. }
\end{eqnarray*}
Hence the result.
\end{proof}

\begin{corollary}
In addition to the hypothesis of Theorem \ref{2bhn}, if there is a unique path from $p$ to $q$ in $\au_H  \times \au_K$, then
\[\widetilde{rk}(H \cap K) \le \widetilde{rk}(H)\widetilde{rk}(K).\]
\end{corollary}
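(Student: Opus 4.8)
The plan is to derive this directly from Theorem \ref{2bhn} by pinning down what the parameter $l$ of Notation \ref{nota} counts in the product automaton $\au = \au_H \times \au_K$. Recall from the BPR of a semi-flower automaton with exactly two bpi's $p$ and $q$ (\textsc{Figure} \ref{fig5}) that $l + k$ is the indegree of $q$, where $k$ counts the in-arcs of $q$ not lying on any path from $p$ to $q$; thus $l$ counts the in-arcs of $q$ that do lie on such paths. Equivalently, as the counting in the proof of Theorem \ref{2brk} makes explicit --- ``each path entering the state $p$ splits into $l$ paths before entering $q$'' --- the number $l$ records the number of (simple, not revisiting $1$) paths from $p$ to $q$ in $\au$.

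First I would record this identification, so that the hypothesis ``there is a unique path from $p$ to $q$ in $\au_H \times \au_K$'' translates to $l = 1$: its existence clause forces $l \ge 1$ and its uniqueness clause forces $l \le 1$. Then I would substitute $l = 1$ into the conclusion of Theorem \ref{2bhn}. The correction term collapses, $(m-1)(l-1) = (m-1)\cdot 0 = 0$, and we obtain
\[
\widetilde{rk}(H \cap K) \;\le\; \widetilde{rk}(H)\,\widetilde{rk}(K) + (m-1)(l-1) \;=\; \widetilde{rk}(H)\,\widetilde{rk}(K),
\]
which is precisely the asserted inequality.

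There is essentially no obstacle here beyond the bookkeeping of the previous step: all the arithmetic and the two applications of Propositions \ref{result-2} and \ref{result-3} have already been carried out inside Theorem \ref{2bhn}, and the corollary only exploits the one extra hypothesis that annihilates its error term. The single point worth a sentence of care is the degenerate configuration $p = q$, which would make $\au$ have a unique bpi and is excluded by the standing assumption that $\au_H \times \au_K$ has \emph{exactly two} bpi's; with $p \ne q$ the parameter $l$ is well defined as above and the argument goes through verbatim.
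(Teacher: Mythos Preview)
Your argument is correct and is exactly the intended one: the paper states this corollary without proof because, once Notation \ref{nota} is in place, a unique path from $p$ to $q$ forces $l=1$, so the term $(m-1)(l-1)$ in Theorem \ref{2bhn} vanishes. Your identification of $l$ with the number of $p$-to-$q$ paths (via the fact that every intermediate state has indegree $1$, so each in-arc of $q$ coming from the $p$-side determines a unique such path) is the right justification, and your remark on $p\ne q$ is already covered by the standing hypothesis of exactly two bpi's.
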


\section{Conclusion}

In this work we have obtained the rank of a finitely generated submonoid of a free monoid that is accepted by a semi-flower automaton with two bpi's. This generalizes the rank result (cf. Theorem \ref{th2.2}) for semi-flower automata with unique bpi by Giambruno and Restivo \cite{giam08}. In fact, the present proof of Theorem \ref{th2.2} is shorter and elegant than that of the original proof by  Giambruno and Restivo. In \cite{giam08}, Giambruno and Restivo obtained HNP for submonoids of a free monoid that are accepted by deterministic semi-flower automata, each with a unique bpi such that their product automaton is semi-flower with a unique bpi. Further, by keeping the former automata as they are, if the latter automaton has more than one bpi, they provided examples which fail to satisfy HNP. In the present work, we give an example which fails to satisfy HNP when the product automaton has two bpi's. In case the product automaton has exactly two bpi's,  we reported a sufficient condition to obtain HNP. The techniques introduced in this work shall give a scope to one in extending our work to a general scenario.

\end{document}